\newtheorem{theorem}{Theorem}
\newtheorem{proposition}[theorem]{Proposition}
\newtheorem{lemma}[theorem]{Lemma}
\newtheorem{example}[theorem]{Example}
\newtheorem{remark}[theorem]{Remark}
\newtheorem{corollary}[theorem]{Corollary}
\newcommand{\cal}{\mathcal}
\begin{document}

\title{Generalizations of Wei's Duality Theorem}

\author[Britz]{Thomas Britz}
\address{School of Mathematics and Statistics,
University of New South Wales, Sydney, NSW 2052, Australia}
\email{britz@unsw.edu.au}
\thanks{T.~Britz was supported by an ARC Discovery Grant.}

\author[Heiseldal]{B{\aa}rd Heiseldal}
\address{AXTech AS, Verftsgt.~10, PO Box 2008, 6402 Molde, Norway}
\email{Bard.Heiseldal@axtech.no}

\author[Johnsen]{Trygve Johnsen}
\address{Department of Mathematics and Statistics, University of
Troms{\o}, N-9037 Troms{\o}, Norway}
\email{trygve.johnsen@uit.no}

\author[Mayhew]{Dillon Mayhew}
\address{School of Mathematics, Statistics and Operations Research,
Victoria University, PO Box 600, Wellington 6140, New Zealand}
\email{dillon.mayhew@msor.vuw.ac.nz}

\author[Shiromoto]{Keisuke Shiromoto}
\address{Department of Mathematics and Engineering,
Kumamoto University, 2-39-1, Kurokami, Kumamoto 860-8555, Japan}
\email{keisuke@kumamoto-u.ac.jp}

\subjclass{05B35}
\date{\today}

\begin{abstract}
Wei's celebrated Duality Theorem is generalized in several ways,
expressed as duality theorems for linear codes over division rings
and, more generally, duality theorems for matroids.
These results are further generalized,
resulting in two Wei-type duality theorems for new combinatorial structures
that are introduced and named {\em demi-matroids}.
These generalize matroids and are the appropriate combinatorial objects
for describing the duality in Wei's Duality Theorem.
A new proof of the Duality Theorem is thereby given
that explains the theorem in combinatorial terms.
Special cases of the general duality theorems are also given,
including duality theorems for cycles and bonds in graphs
and for transversals.
\end{abstract}

\maketitle


\section{Introduction}
In 1991, Victor K.~Wei presented his remarkably simple and elegant Duality Theorem~\cite[Theorem~3]{wei91}
concerning the generalized Hamming weights (higher weights) of a code over a finite field and of its dual code.
This result has attracted much favorable attention among coding theoreticians
and has created an active sub-field of research,
while also re-invigorating the studies of higher weight enumerators (support weight enumerators)
for linear codes over finite fields (see~\cite{ChKl03,DoGu01,HeKlMy77,HoSh99,kloeve92,MiCoCo03,milenkovic05,shiromoto03,TsVl95} for instance).

Wei's Duality Theorem has been re-proved and generalised by several authors,
often with respect to codes over certain finite rings
(see \cite{Ashikhimin98,heiseldal08,HoSh01,schaathunthesis} for a few generalizations).
In the present paper,
the Duality Theorem is generalized with respect to matroids, and more generally, to demi-matroids.
Matroids capture the combinatorial essence of linear independence,
and recent research has demonstrated how matroid theory may be applied with good effect to coding-theoretical problems
(see~\cite{BrBrShSo07,britz07a}).
More general than matroids are the demi-matroids,
which turn out to be the appropriate combinatorial structures for describing the duality in Wei's Duality Theorem.
Section~\ref{sec:defnnot} introduces demi-matroids and describes some of their parameters and properties.
In addition to the matroid dual, each demi-matroid also has another type of dual, its {\em supplement}.
The dual and the supplement each give rise to a generalization of Wei's Duality Theorem
and thereby explain in combinatorial terms why Wei's Duality Theorem holds.

Special cases of these generalizations are presented in Section~\ref{sec:specialcases}.
The first special case is Wei's Duality Theorem generalized with respect to matroids.
In turn, this result has as special cases two duality results,
namely for graphs (first proved in~\cite{britz07a}) and for transversals.
In each of these two cases,
the matroid-analogues of the generalized Hamming weights have natural graph- or transversal interpretations.
A final special case is a duality theorem
concerning the generalized Hamming weights of a code over a division ring and of its dual code;
this generalizes Wei's Duality Theorem and provides a new proof thereof.

This paper is largely self-explanatory but does at times refer to some elementary matroid theory.
For information on matroids, see the excellent expositions~\cite{oxley06,welsh76}.

\section{Wei-type duality theorems for demi-matroids}
\label{sec:defnnot}

A {\em demi-matroid} is a triple $(E,s,t)$ consisting of
a set $E$ and two functions $s,t:2^E\longmapsto\mathbb{N}_0$
satisfying the following two conditions for all subsets $X\subseteq Y\subseteq E$:
\begin{list}{}{\leftmargin=1em\topsep=1.5mm\itemsep=1mm}
\item[(R)] $0\leq s(X)\leq s(Y)\leq |Y|$ and $0\leq t(X)\leq t(Y)\leq |Y|$;
\item[(D)] $|E-X|-s(E-X)=t(E)-t(X)$.
\end{list}
Note that $s(\emptyset)=t(\emptyset)=0$ by (R).
It follows that (D) is equivalent to the following condition:
\begin{list}{}{\leftmargin=1em\topsep=1.5mm\itemsep=1mm}
\item[(D')] $|E-X|-t(E-X)=s(E)-s(X)$.
\end{list}

Note that for any matroid $M$ on $E$ with rank function $\rho$,
the triple $(E,\rho,\rho^*)$ is a demi-matroid.
Conversely, if $(E,s,t)$ is a demi-matroid,
then $s$ is the rank function of a matroid $M$ on $E$ if and only if $t$ is the rank function of $M^*$.
The following example shows that demi-matroids properly generalize matroids.
\begin{example}
\label{exa:not_matroid}
{\rm
Suppose that $E=\{a,b\}$ and define $s(X):=0$ for $X=\emptyset,\{a\},\{b\}$, and $s(E):=1$.
The triple $(E,s,s)$ is a demi-matroid but $s$ is not the rank function of any (poly)matroid on $E$.}
\end{example}

Let $E$ be a set of $n$ elements, and let $D=(E,s,t)$ be a demi-matroid.
By~(D),\vspace{-.5mm}
\[
s(E)+t(E)=n\,.\vspace{-.5mm}
\]
Set $k:=s(E)$.
\begin{lemma}
\label{lem:difference}
$s(X-x)\geq s(X)-1$ and $t(X-x)\geq t(X)-1$
for all $X\subseteq E$ and $x\in E$.
\end{lemma}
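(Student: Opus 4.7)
The plan is to reduce each inequality to monotonicity of the \emph{other} function via the duality axiom. The two inequalities are completely symmetric: the $s$-inequality will follow from (D$'$) and monotonicity of $t$, while the $t$-inequality will follow from (D) and monotonicity of $s$. I will describe the $s$-case in detail.

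First I dispose of the trivial case $x\notin X$, where $X-x=X$ and the inequality reads $s(X)\geq s(X)-1$. So I may assume $x\in X$. The key observation is then that $E-(X-x)=(E-X)\cup\{x\}$, so in particular $|E-(X-x)|=|E-X|+1$ and $E-X\subseteq E-(X-x)$.

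Next I apply (D$'$) to both $X$ and $X-x$ to get
\[
s(E)-s(X)=|E-X|-t(E-X),\qquad s(E)-s(X-x)=|E-X|+1-t(E-(X-x))\,.
\]
Subtracting yields
\[
s(X)-s(X-x)=1-\bigl(t(E-(X-x))-t(E-X)\bigr)\,.
\]
Since $E-X\subseteq E-(X-x)$, condition (R) applied to $t$ gives $t(E-X)\leq t(E-(X-x))$, so the bracketed quantity is nonnegative and hence $s(X)-s(X-x)\leq 1$, as required. The proof for $t$ is the same argument with the roles of $s$ and $t$ exchanged, using (D) in place of (D$'$).

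There is no real obstacle here: the only thing to be careful about is the bookkeeping with complements (to see that (D$'$) is actually the right tool for the $s$-inequality rather than (D)), and the splitting off of the trivial case $x\notin X$. Both ingredients needed—the complement identity and the monotonicity of $t$—are immediate from (R) and from the setup, so the proof is essentially a two-line consequence of (D$'$).
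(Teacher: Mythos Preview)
Your proof is correct and follows essentially the same idea as the paper's: reduce each inequality to monotonicity of the \emph{other} function via the duality axiom. The only cosmetic differences are that the paper proves the $t$-inequality first (using (D) directly, rearranged as $t(X)=t(E)-|E-X|+s(E-X)$), whereas you prove the $s$-inequality first via (D$'$), and the paper does not separate out the case $x\notin X$ since the chain of inequalities $|E-(X-x)|\leq |E-X|+1$ and $s(E-(X-x))\geq s(E-X)$ holds uniformly.
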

\begin{proof}
By (R) and (D),
\begin{align*}
t(X-x)
&=    t(E) - |E-(X-x)| + s(E-(X-x))\\
&\geq t(E) - |E-X| - 1 + s(E-X)\\
&=    t(X) - 1\,.
\end{align*}
Similarly,
$s(X-x)\geq s(X)-1$.
\end{proof}

Define for all $i=0,\ldots,k$ and $j=0,\ldots,n-k$,\label{equ:stst_def}
\begin{align*}
\sigma_i &:= \min \{\, |X|  \::\:  X\subseteq E,\: s(X) \geq i\}\,;\\
\tau_j   &:= \min \{\, |X|  \::\:  X\subseteq E,\: t(X) \geq j\}\,;\\[1mm]
  s_i    &:= \max \{\, |X|  \::\:  X\subseteq E,\: s(X) \leq i\}\,;\\
  t_j    &:= \max \{\, |X|  \::\:  X\subseteq E,\: t(X) \leq j\}\,.
\end{align*}
By (R) and Lemma~\ref{lem:difference},
all of the numbers $\sigma_i$, $\tau_i$, $s_i$, and $t_j$ are well-defined
and may be given the following equivalent characterizations:
\begin{lemma}
\label{lem:altdef}
For all $i=0,\ldots,k$ and $j=0,\ldots,n-k$,
\begin{align*}
\sigma_i &= \min \{\, |X|  \::\:  X\subseteq E,\: s(X) = i\}\,;\\
\tau_j   &= \min \{\, |X|  \::\:  X\subseteq E,\: t(X) = j\}\,;\\
  s_i    &= \max \{\, |X|  \::\:  X\subseteq E,\: s(X) = i\}\,;\\
  t_j    &= \max \{\, |X|  \::\:  X\subseteq E,\: t(X) = j\}\,.
\end{align*}
\end{lemma}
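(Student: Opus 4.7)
The plan is to deduce all four equalities from a single observation: the functions $s$ and $t$ are ``$1$-Lipschitz'' with respect to adding or removing a single element. More precisely, combining Lemma~\ref{lem:difference} with the monotonicity axiom (R) yields $s(X)\leq s(X\cup\{x\})\leq s(X)+1$ for every $X\subseteq E$ and $x\in E-X$, and analogously for $t$. I would record this remark at the start of the proof, since it drives everything.

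With that in hand, the first equality becomes a discrete intermediate-value argument. The containment $\{X:s(X)=i\}\subseteq\{X:s(X)\geq i\}$ immediately gives $\sigma_i\leq\min\{|X|:s(X)=i\}$. For the reverse inequality, I would take a minimizer $X$ in the original definition, so $s(X)\geq i$, and remove its elements one at a time to form a chain $X=X_0\supset X_1\supset\cdots\supset X_{|X|}=\emptyset$. The sequence $s(X_0),\ldots,s(X_{|X|})=0$ starts at a value $\geq i$, ends at $0\leq i$, and changes in steps of $0$ or $1$, so it must take the value $i$ at some index $\ell$. Then $X_\ell$ witnesses $\min\{|X|:s(X)=i\}\leq|X_\ell|\leq|X|=\sigma_i$.

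The identity for $s_i$ is handled by the symmetric argument: take a maximizer $X$ with $s(X)\leq i$, extend it to $E$ by adjoining elements one at a time, and note that the values of $s$ rise from $s(X)\leq i$ to $s(E)=k\geq i$ in unit steps, so they hit $i$ at some superset of $X$, giving the required lower bound $s_i\leq\max\{|X|:s(X)=i\}$. The two identities for $\tau_j$ and $t_j$ follow from exactly the same two arguments applied to $t$ in place of $s$, using the $t$-half of Lemma~\ref{lem:difference} together with $t(E)=n-k\geq j$ to guarantee the chain reaches a value $\geq j$.

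I do not expect any real obstacle here: the whole proof hinges on the $1$-Lipschitz observation, and that is essentially the content of Lemma~\ref{lem:difference}. The remainder is a routine monotone chain argument, and it is worth writing out only once and then invoking ``similarly'' for the three parallel cases.
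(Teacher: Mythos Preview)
Your proposal is correct and follows exactly the route the paper indicates: the paper offers no detailed proof, merely the remark preceding the lemma that the equivalent characterizations follow from~(R) and Lemma~\ref{lem:difference}, and your $1$-Lipschitz observation together with the discrete intermediate-value chain argument is precisely the way to unpack that remark. One could shorten the argument slightly by noting that a minimizer $X$ with $s(X)\geq i$ must in fact satisfy $s(X)=i$ (else removing an element contradicts minimality via Lemma~\ref{lem:difference}), but your chain version is equally valid and handles all four identities uniformly.
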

\begin{remark}
\label{rem:matroid-trivial}
{\rm If $M$ is a matroid on $E$ with rank function~$\rho$,
then the coefficients $\sigma_i,\tau_j$ for demi-matroid $D:=(E,\rho,\rho^*)$ are trivial:
$\sigma_i=i$ and $\tau_j=j$ for all $i,j$.}
\end{remark}
\begin{lemma}
\label{lem:monotonicity}
The following inequalities hold:
\begin{align*}
0&=\sigma_0<\sigma_1<\sigma_2<\cdots<\sigma_k\leq n\,;\\
0&=\tau_0<\tau_1<\tau_2<\cdots<\tau_{n-k}\leq n\,.
\end{align*}
\end{lemma}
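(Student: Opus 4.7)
The plan is to verify the endpoints and then establish strict monotonicity by invoking Lemma~\ref{lem:difference}; everything follows directly from the definitions, and there is no serious obstacle.

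First I would dispose of the boundary conditions. Since $s(\emptyset)=0\geq 0$, the empty set witnesses $\sigma_{0}\leq 0$, and non-negativity of cardinality gives $\sigma_{0}=0$. At the other end, $s(E)=k\geq k$, so $E$ itself is an admissible set in the definition of $\sigma_{k}$, yielding $\sigma_{k}\leq n$. The analogous facts for $\tau_{0}$ and $\tau_{n-k}$ follow identically, using $t(\emptyset)=0$ and $t(E)=n-k$.

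For the strict inequalities, fix $i$ with $0\leq i<k$ and let $X\subseteq E$ achieve the minimum defining $\sigma_{i+1}$, so that $|X|=\sigma_{i+1}$ and $s(X)\geq i+1\geq 1$. In particular $X\neq\emptyset$, so I can pick any $x\in X$. By Lemma~\ref{lem:difference},
\[
s(X-x)\geq s(X)-1\geq i,
\]
so $X-x$ is admissible in the definition of $\sigma_{i}$, giving $\sigma_{i}\leq |X-x|=\sigma_{i+1}-1<\sigma_{i+1}$. The same argument, using the second inequality in Lemma~\ref{lem:difference} and a minimizer for $\tau_{j+1}$, yields $\tau_{j}<\tau_{j+1}$ for $0\leq j<n-k$.

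The only step requiring any real content is the strict monotonicity, and it is handled cleanly by the one-element removal bound supplied by Lemma~\ref{lem:difference}; the rest is bookkeeping on $s(\emptyset)$, $s(E)$, $t(\emptyset)$, and $t(E)$. I do not anticipate any obstacle.
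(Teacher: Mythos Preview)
Your argument is correct and is essentially the same as the paper's: both take a minimizer $X$ for $\sigma_i$ (respectively $\sigma_{i+1}$), remove a single element, and invoke Lemma~\ref{lem:difference} to witness the next-lower parameter with a strictly smaller set. Your write-up is slightly more detailed in checking $X\neq\emptyset$ and the boundary values, but the method is identical.
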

\begin{proof}
For each $i=1,\ldots,k$,
let $X \subseteq E$ be a subset such that $|X|=\sigma_i$ and $s(X) \geq i$.
By Lemma~\ref{lem:difference},
$s(X-x)\geq i-1$ for any $x\in X$,
so $\sigma_{i-1} \leq |X-x| <\sigma_i$.

Similarly, $\tau_{j-1} <\tau_j$ for each $j=1,\ldots,n-k$.
\end{proof}

\begin{lemma}
\label{lem:monotonicity-min}
The following inequalities hold:
\begin{align*}
0&\leq s_0<s_1<s_2<\cdots<s_k=n\,;\\
0&\leq t_0<t_1<t_2<\cdots<t_{n-k}=n\,.
\end{align*}
\end{lemma}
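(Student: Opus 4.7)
My plan is to mirror the proof of Lemma~\ref{lem:monotonicity}, but with the roles of removal and insertion interchanged. The key observation is that Lemma~\ref{lem:difference}, originally stated as $s(X-x)\geq s(X)-1$, can be rewritten in its ``upward'' form: for any $X\subseteq E$ and any $y\in E-X$, setting $X':=X\cup\{y\}$ and applying the lemma with $x=y$ gives $s(X)=s(X'-y)\geq s(X')-1$, i.e.\ $s(X\cup\{y\})\leq s(X)+1$. The same reasoning applies to $t$.

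First I would dispose of the endpoint statements. The bound $s_0\geq 0$ is immediate since $\emptyset$ satisfies $s(\emptyset)=0\leq 0$, and $t_0\geq 0$ is analogous. For the upper endpoint, note that $s(E)=k\leq k$, so $E$ itself is admissible in the definition of $s_k$, giving $s_k=n$; likewise $t(E)=n-k$ yields $t_{n-k}=n$.

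Next I would establish the strict monotonicity $s_i<s_{i+1}$ for $i=0,\ldots,k-1$. The main point is that $s_i<n$ whenever $i<k$: if instead $s_i=n$, then $E$ would be a witness for $s_i$, forcing $s(E)\leq i<k=s(E)$, a contradiction. So pick a subset $X\subseteq E$ with $|X|=s_i$ and $s(X)\leq i$, and choose $y\in E-X$, which is possible since $|X|<n$. By the upward form of Lemma~\ref{lem:difference}, $s(X\cup\{y\})\leq s(X)+1\leq i+1$, and $|X\cup\{y\}|=s_i+1$, so $s_{i+1}\geq s_i+1$. The analogous argument with $t$ and $n-k$ in place of $s$ and $k$ handles the second chain.

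I do not expect a real obstacle here; the only subtle point is recognising that Lemma~\ref{lem:difference} yields the upward bound $s(X\cup\{y\})\leq s(X)+1$, which is exactly what lets a maximal $X$ at level $i$ be extended into an admissible set at level $i+1$ of strictly larger size.
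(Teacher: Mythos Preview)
Your argument is correct and is precisely the ``similar'' proof the paper has in mind: you take a maximizer $X$ for $s_i$, note that $X\neq E$ when $i<k$, and use the upward reformulation of Lemma~\ref{lem:difference} to push $X$ to an admissible set for $s_{i+1}$ of size $|X|+1$. This is the exact dual of the paper's proof of Lemma~\ref{lem:monotonicity}, with insertion replacing deletion.
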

\begin{proof}
Similar to the proof of Lemma~\ref{lem:monotonicity}.
\end{proof}

The four above monotonicities each induce a generalized Singleton-type bound for demi-matroids:
\begin{corollary}
For all $i=0,\ldots,k$ and $j=0,\ldots,n-k$,
\begin{align*}
\sigma_i &\leq n-k+i\,;\\
s_i      &\leq n-k+i\,;\\
\tau_j   &\leq k+j  \,;\\
t_j      &\leq k+j  \,.
\end{align*}
\end{corollary}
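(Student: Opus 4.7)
The plan is to deduce each of the four Singleton-type bounds directly from the strict monotonicity established in Lemmas~\ref{lem:monotonicity} and~\ref{lem:monotonicity-min}. The key observation is that each of the four sequences $(\sigma_i)$, $(\tau_j)$, $(s_i)$, $(t_j)$ consists of strictly increasing nonnegative integers, so terms that are $m$ indices apart must differ by at least $m$. Combined with the endpoint information already recorded in those lemmas, this forces the desired bounds.

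More concretely, for $\sigma_i$ I would chain the strict inequalities $\sigma_i<\sigma_{i+1}<\cdots<\sigma_k\leq n$ from Lemma~\ref{lem:monotonicity}, which gives $\sigma_i\leq\sigma_k-(k-i)\leq n-k+i$. The argument for $\tau_j$ is symmetric, using $\tau_j<\tau_{j+1}<\cdots<\tau_{n-k}\leq n$ to obtain $\tau_j\leq n-(n-k-j)=k+j$. The bounds on $s_i$ and $t_j$ are obtained in exactly the same way from Lemma~\ref{lem:monotonicity-min}, now exploiting the stronger endpoint equalities $s_k=n$ and $t_{n-k}=n$. There is no genuine obstacle here: the corollary is an immediate counting consequence of the fact that a strictly increasing sequence of integers of length $m+1$ starting at a value $a$ must reach at least $a+m$, so the only task is to record the four applications.
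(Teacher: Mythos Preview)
Your argument is correct and is exactly the approach the paper intends: the corollary is stated without proof, with only the remark that the four monotonicities of Lemmas~\ref{lem:monotonicity} and~\ref{lem:monotonicity-min} ``each induce'' the bounds. Your chaining of the strict inequalities together with the endpoint information $\sigma_k,\tau_{n-k}\leq n$ and $s_k=t_{n-k}=n$ is precisely the intended one-line deduction.
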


The {\em dual demi-matroid} of
a demi-matroid $D:=(E,s,t)$ is the triple $D^*:=(E,t,s)$.
The operation $D \longmapsto D^*$ is clearly an involution,
i.e,
\[
D=(D^*)^*\,.
\]
A second fundamental involution on demi-matroids is now presented.
For any real function $f:2^E\longmapsto\mathbb{R}$,
let $\overline{f}$ denote the function given by
\[
 \overline{f}(X) := f(E)-f(E-X)\,.
\]
Since
\[
 \overline{\overline{f}}(X)
=\overline{f}(E)-\overline{f}(E-X)
=f(X)-f(\emptyset)\,,
\]
it follows that if $f(\emptyset)=0$, then the operation $f\longmapsto\overline{f}$ is an involution,
i.e, $f=\overline{\overline{f}}$.
\begin{theorem}
\label{thm:involution}
The triple $\overline{D}:=(E,\overline{s},\overline{t})$ is a demi-matroid;
furthermore, $D=\overline{\overline{D}}$ and $\overline{D^*}=\overline{D}^*$.
\end{theorem}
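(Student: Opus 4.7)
The plan is to verify the three assertions in the order stated. First I would check that $\overline{s}$ and $\overline{t}$ satisfy condition (R). Monotonicity is immediate from the definition: if $X\subseteq Y$, then $E-Y\subseteq E-X$, so (R) applied to $s$ gives $s(E-Y)\leq s(E-X)$, hence $\overline{s}(X)=s(E)-s(E-X)\leq s(E)-s(E-Y)=\overline{s}(Y)$, and similarly for $\overline{t}$. Non-negativity is just $s(E-X)\leq s(E)$. The upper bound $\overline{s}(Y)\leq |Y|$ is the only piece that needs (D): substituting $s(E-Y)=|E-Y|-t(E)+t(Y)$ from (D) and using $s(E)+t(E)=n$ gives the clean identity $\overline{s}(X)=|X|-t(X)$, from which $\overline{s}(Y)\leq |Y|$ is clear; the argument for $\overline{t}(Y)\leq |Y|$ is symmetric using (D').

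Next I would verify that $\overline{D}$ satisfies (D), i.e., $|E-X|-\overline{s}(E-X)=\overline{t}(E)-\overline{t}(X)$. The right-hand side unwinds to $t(E-X)$ because $t(\emptyset)=0$, while the left-hand side unwinds to $|E-X|-s(E)+s(X)$ directly from the definition of $\overline{s}$. The desired equality is then exactly (D'), so this step reduces to a one-line substitution.

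For the involution $D=\overline{\overline{D}}$, I would simply invoke the observation made just before the theorem statement: since $s(\emptyset)=t(\emptyset)=0$, the map $f\mapsto\overline{f}$ is an involution on each of $s$ and $t$ separately, so $\overline{\overline{s}}=s$ and $\overline{\overline{t}}=t$. Finally, $\overline{D^*}=\overline{D}^*$ is a pure definition chase: $D^*=(E,t,s)$ gives $\overline{D^*}=(E,\overline{t},\overline{s})$, and $\overline{D}=(E,\overline{s},\overline{t})$ gives $\overline{D}^*=(E,\overline{t},\overline{s})$, so the two agree.

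The only step that is even mildly nontrivial is establishing the upper bound $\overline{s}(Y)\leq |Y|$ in (R); everything else is either a direct monotonicity observation or a short algebraic manipulation that uses (D)/(D') exactly once. I do not anticipate any genuine obstacle — the role of (D) is essentially to convert $s(E-X)$ into a quantity involving $t(X)$, after which each identity collapses.
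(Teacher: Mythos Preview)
Your proposal is correct and follows essentially the same route as the paper's proof: both verify (R) for $\overline{s}$ via monotonicity of $s$ together with the identity $\overline{s}(Y)=|Y|-t(Y)$ coming from (D), reduce (D) for $\overline{D}$ to (D') for $D$, and handle the involution and commutation with $*$ by direct definition-chasing. The only cosmetic difference is that you isolate the identity $\overline{s}(X)=|X|-t(X)$ explicitly, whereas the paper embeds it in the single chain $0\leq s(E)-s(E-X)\leq s(E)-s(E-Y)=|Y|-t(Y)\leq|Y|$.
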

\begin{proof}
To show that $\overline{D}$ is a demi-matroid,
first note
that $\overline{s}(\emptyset)=\overline{t}(\emptyset)=0$ and
that $\overline{s}(E)=s(E)$ and $\overline{t}(E)=t(E)$.
Consider subsets $X\subseteq Y\subseteq E$.
By (R) and (D),
\[
0\leq s(E)-s(E-X)\leq s(E)-s(E-Y)=|Y|-t(Y)\leq|Y|,
\]
so $0\leq\overline{s}(X)\leq\overline{s}(Y)\leq|Y|$.
Similarly, it is easy to show that
$0\leq\overline{t}(X)\leq\overline{t}(Y)\leq|Y|$,
so $\overline{D}$ satisfies $(R)$.
By (D'),
\begin{align*}
  |E-X|-\overline{s}(E-X)
&= |E-X|-(s(E)-s(X))\\
&= t(E-X)\\
&= t(E)-\overline{t}(X)\\
&= \overline{t}(E)-\overline{t}(X)\,,
\end{align*}
so $\overline{D}$ satisfies (D).
Hence, $\overline{D}$ is a demi-matroid.

Finally,
note that
$D = (E,s,t)
   = (E,\overline{\overline{s}},\overline{\overline{t}})
   = \overline{\overline{D}}$
and that
$\overline{D^*}
   = \overline{(E,t,s)}
   = (E,\overline{t},\overline{s})
   = (E,\overline{s},\overline{t})^*
   = \overline{D}^*$.
\end{proof}

The demi-matroid $\overline{D}$ is called the {\em supplement} of~$D$.
\begin{example}
\label{exa:tri}
{\rm The supplement operation does not generally apply to matroids.
For instance, consider the matroid $M:=(E,\rho)$
where $E=\{a,b,c\}$ and $\rho(X)=0$ for $X=\emptyset,\{a\}$ and $\rho(X)=1$ for all other subsets $X\subseteq E$.
Then $D:=(E,\rho,\rho^*)$ is a demi-matroid,
so $\overline{D}=(E,\overline{r},\overline{r^*})$ is also a demi-matroid.
However, $(E,\overline{\rho})$ is not a matroid,
since it would have rank 1 but only contain loops.}
\end{example}

Define for all $i=0,\ldots,k$ and $j=0,\ldots,n-k$,
\begin{align*}
  \overline{\sigma}_i &:= \min \{\, |X|  \::\:  X\subseteq E,\: \overline{s}(X) = i\}\,;\\
  \overline{\tau}_j   &:= \min \{\, |X|  \::\:  X\subseteq E,\: \overline{t}(X) = j\}\,;\\[2mm]
  \overline{s}_i      &:= \max \{\, |X|  \::\:  X\subseteq E,\: \overline{s}(X) = i\}\,;\\
  \overline{t}_j      &:= \max \{\, |X|  \::\:  X\subseteq E,\: \overline{t}(X) = j\}\,.
\end{align*}
\begin{lemma}
\label{lem:equ}
For each $i=0,\ldots,k$ and $j=0,\ldots,n-k$,
\begin{align*}
  s_i      &= n-\overline{\sigma}_{k-i}\,;\\
  \sigma_i &= n-\overline{s}_{k-i}\,;\\
  t_j      &= n-\overline{\tau}_{n-k-j}\,;\\
  \tau_j   &= n-\overline{t}_{n-k-j}\,.
\end{align*}
\end{lemma}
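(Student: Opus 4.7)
The plan is to exploit the bijection $X \longleftrightarrow E-X$ on $2^E$ together with the identities
\[
\overline{s}(E-X) = s(E) - s(X) = k - s(X), \qquad \overline{t}(E-X) = t(E) - t(X) = (n-k) - t(X),
\]
which are immediate from the definitions of $\overline{s}$ and $\overline{t}$ combined with $s(E) = k$ and $t(E) = n-k$. Each of the four identities will then fall out by applying this substitution to one of the parameters $s_i$, $\sigma_i$, $t_j$, $\tau_j$.

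To illustrate with the first identity, I would use Lemma~\ref{lem:altdef} to write $s_i = \max\{|X| : X \subseteq E,\: s(X) = i\}$. Setting $Y := E - X$ gives $|X| = n - |Y|$ and, by the displayed formula above, $s(X) = i$ if and only if $\overline{s}(Y) = k-i$. As $X$ ranges over subsets with $s(X) = i$, $Y$ ranges over subsets with $\overline{s}(Y) = k-i$, so
\[
s_i = \max\{n - |Y| : \overline{s}(Y) = k-i\} = n - \min\{|Y| : \overline{s}(Y) = k-i\} = n - \overline{\sigma}_{k-i}.
\]
The identity $\sigma_i = n - \overline{s}_{k-i}$ follows from exactly the same substitution, with min and max interchanged.

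The remaining two identities are proved identically, using the formula $\overline{t}(E-X) = (n-k) - t(X)$; the equivalence $t(X) = j \Longleftrightarrow \overline{t}(E-X) = (n-k)-j$ matches $t_j$ with $\overline{\tau}_{n-k-j}$ and $\tau_j$ with $\overline{t}_{n-k-j}$. The argument is essentially mechanical, and I do not anticipate a serious obstacle. The one point requiring care is invoking Lemma~\ref{lem:altdef} at the start so that the defining inequalities $s(X) \leq i$ and $s(X) \geq i$ can be replaced by the equality $s(X) = i$; without this reduction, the bijection between the sets $\{X : s(X) = i\}$ and $\{Y : \overline{s}(Y) = k-i\}$ induced by complementation would not transport the extremal values so cleanly.
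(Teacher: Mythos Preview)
Your proposal is correct and follows essentially the same route as the paper: invoke Lemma~\ref{lem:altdef} to pass to equalities, apply the complementation bijection $X\leftrightarrow E-X$ together with $\overline{s}(E-X)=k-s(X)$, and convert the max into $n$ minus a min. The paper carries this out in a single displayed chain for $s_i$ and then declares the remaining identities similar, exactly as you do.
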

\begin{proof}
By Lemma~\ref{lem:altdef},
\begin{align*}
  s_i &= \max\{|X|   \::\: X\subseteq E,\: s(X) = i\}\\
      &= \max\{|E-X| \::\: X\subseteq E,\: s(E-X) = i\}\\
      &= n-\min\{|X| \::\: X\subseteq E,\: \overline{s}(X) = k-i\}\\
      &= n-\overline{\sigma}_{k-i}\,.
\end{align*}
The remaining identities are proved similarly.
\end{proof}

For each demi-matroid $D$,
set
\begin{align*}
S_D  &:= \{n-s_{k-1},\ldots,n-s_1,n-s_0\}\,;\\
T_D  &:= \{t_0+1,t_1+1,\ldots,t_{n-k-1}+1\}\,;\\[2mm]
U_D  &:= \{\sigma_1,\sigma_2,\ldots, \sigma_k\}\,;\\
V_D  &:= \{n+1-\tau_{n-k},\ldots,n+1-\tau_2, n+1-\tau_1 \}\,.
\end{align*}
Lemma~\ref{lem:equ} implies the following identities:
\begin{lemma}
\label{lem:setequ}
$S_D = U_{\overline{D}}$ and $T_D = V_{\overline{D}}$.
\end{lemma}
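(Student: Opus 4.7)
The plan is to observe that Lemma~\ref{lem:equ} is essentially already the statement we want, once translated from individual coefficients to the sets $S_D, T_D, U_{\overline{D}}, V_{\overline{D}}$. So the proof is a pure index-bookkeeping exercise: substitute the identities of Lemma~\ref{lem:equ} into the definitions of $S_D$ and $T_D$ and reindex.

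For the first equality, I would start from
\[
S_D = \{n - s_{k-1},\, n - s_{k-2},\, \ldots,\, n - s_0\}\,.
\]
Lemma~\ref{lem:equ} gives $n - s_i = \overline{\sigma}_{k-i}$ for each $i = 0,\ldots,k$, so as $i$ runs through $0, 1, \ldots, k-1$, the index $k-i$ runs through $k, k-1, \ldots, 1$. Substitution therefore yields $S_D = \{\overline{\sigma}_1, \overline{\sigma}_2, \ldots, \overline{\sigma}_k\}$. Since $\overline{D}$ has $\overline{s}(E) = s(E) = k$ and ground set of size $n$, the definition of $U_{\overline{D}}$ reads exactly as this set, and so $S_D = U_{\overline{D}}$.

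For the second equality, I would start from
\[
T_D = \{t_0 + 1,\, t_1 + 1,\, \ldots,\, t_{n-k-1} + 1\}\,,
\]
use Lemma~\ref{lem:equ} to write $t_j + 1 = n + 1 - \overline{\tau}_{n-k-j}$, and let $j$ run through $0, 1, \ldots, n-k-1$ so that $n-k-j$ runs through $n-k, n-k-1, \ldots, 1$. This gives $T_D = \{n+1-\overline{\tau}_{n-k},\, n+1-\overline{\tau}_{n-k-1},\, \ldots,\, n+1-\overline{\tau}_1\}$, and since $\overline{D}$ has $\overline{t}(E) = t(E) = n-k$, the definition of $V_{\overline{D}}$ is exactly this set.

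There is no real obstacle here; the only thing to be careful about is the range of indices (making sure $\sigma$ is indexed $1,\ldots,k$ and $\tau$ is indexed $1,\ldots,n-k$ in the new expressions) and the use of $\overline{s}(E) = s(E)$ and $\overline{t}(E) = t(E)$, both noted in the proof of Theorem~\ref{thm:involution}, so that $U_{\overline{D}}$ and $V_{\overline{D}}$ have the expected index ranges.
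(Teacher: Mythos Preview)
Your proposal is correct and is exactly the approach the paper takes: the paper simply states that Lemma~\ref{lem:equ} implies Lemma~\ref{lem:setequ} and gives no further argument, so your index-bookkeeping is precisely the omitted detail. The observation that $\overline{s}(E)=s(E)=k$ and $\overline{t}(E)=t(E)=n-k$ (from the proof of Theorem~\ref{thm:involution}) is indeed what guarantees the index ranges for $U_{\overline{D}}$ and $V_{\overline{D}}$ match up.
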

The main results of this paper are the following fundamental duality theorems
for demi-matroids that each generalize Wei's Duality Theorem~\cite{wei91}.
\begin{theorem}
\label{thm:demi-matroidWei-equ}
$S_D\cup T_D=\{1,\ldots,n\}$ and $S_D\cap T_D=\emptyset$.
\end{theorem}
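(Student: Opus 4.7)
The plan is to reduce the statement to a disjointness claim and then establish disjointness by a short duality-based contradiction argument.

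First, pin down cardinalities. By Lemma~\ref{lem:monotonicity-min}, the strict chain $s_0<s_1<\cdots<s_{k-1}<s_k=n$ shows that the $k$ numbers $n-s_{k-1},\ldots,n-s_0$ are distinct and lie in $\{1,\ldots,n\}$, so $|S_D|=k$ and $S_D\subseteq\{1,\ldots,n\}$. Symmetrically $|T_D|=n-k$ and $T_D\subseteq\{1,\ldots,n\}$. Since $|S_D|+|T_D|=n$, it suffices to prove $S_D\cap T_D=\emptyset$; the equality $S_D\cup T_D=\{1,\ldots,n\}$ then follows automatically.

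For disjointness, argue by contradiction. Suppose $v\in S_D\cap T_D$, so $v=n-s_i=t_j+1$ for some $i\in\{0,\ldots,k-1\}$ and $j\in\{0,\ldots,n-k-1\}$; equivalently, $s_i+t_j=n-1$. By Lemma~\ref{lem:altdef}, choose witnesses $X$ with $|X|=s_i$, $s(X)=i$ and $Y$ with $|Y|=t_j$, $t(Y)=j$. The key observation is that each duality axiom determines one of $t(E-X)$, $s(E-Y)$ \emph{exactly}: axiom (D') applied to $X$ computes $t(E-X)=|E-X|-(k-i)=(t_j+1)-(k-i)$, while (D) applied to $Y$ computes $s(E-Y)=|E-Y|-((n-k)-j)=k+j-t_j$. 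Meanwhile the assumed equation $s_i+t_j=n-1$ gives $|E-X|=t_j+1>t_j$ and $|E-Y|=s_i+1>s_i$, so the maximality characterizations of $t_j$ and $s_i$ force $t(E-X)\geq j+1$ and $s(E-Y)\geq i+1$. Rearranging the two resulting inequalities yields $t_j\geq j+k-i$ on the one hand and $t_j\leq j+k-i-1$ on the other, which is the desired contradiction.

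The main obstacle is essentially bookkeeping: matching the complement operation to the right duality axiom and to the right maximality definition. Once the symmetric pairing is spotted --- apply (D') to the $s$-witness $X$ and (D) to the $t$-witness $Y$ --- the two bounds on $t_j$ appear immediately and collide, so no further work is needed.
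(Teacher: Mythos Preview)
Your argument is correct. The reduction to disjointness via Lemma~\ref{lem:monotonicity-min} is sound, and the contradiction step is clean: the duality axioms compute $t(E-X)$ and $s(E-Y)$ exactly, while the maximality defining $t_j$ and $s_i$ forces lower bounds on those same quantities once $|E-X|>t_j$ and $|E-Y|>s_i$; the two resulting bounds on $t_j$ collide.

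The paper organizes things differently. It proves Theorem~\ref{thm:demi-matroidWei} first---the analogous disjointness statement for $U_D$ and $V_D$, phrased in terms of $\sigma_i$ and $\tau_j$---by essentially the same duality contradiction you run, and then deduces Theorem~\ref{thm:demi-matroidWei-equ} by applying that result to the supplement $\overline{D}$ and invoking Lemma~\ref{lem:setequ} ($S_D=U_{\overline{D}}$, $T_D=V_{\overline{D}}$). So the paper's route passes through the supplement involution of Theorem~\ref{thm:involution} and Lemma~\ref{lem:equ}, which has the virtue of unifying Theorems~\ref{thm:demi-matroidWei-equ} and~\ref{thm:demi-matroidWei} as a single computation transported by $D\mapsto\overline{D}$. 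Your route is more self-contained: it never touches $\overline{D}$ or Lemmas~\ref{lem:equ}--\ref{lem:setequ}, and works directly with the max-type invariants $s_i,t_j$ rather than the min-type $\sigma_i,\tau_j$. The underlying inequality manipulation is the same in both.
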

\begin{theorem}
\label{thm:demi-matroidWei}
$U_D\cup V_D=\{1,\ldots,n\}$ and $U_D\cap V_D=\emptyset$.
\end{theorem}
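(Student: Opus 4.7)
The plan is to deduce Theorem~\ref{thm:demi-matroidWei} from Theorem~\ref{thm:demi-matroidWei-equ} by a single application of the supplement involution. The two statements look like symmetric twins: one describes the pair $(S_D,T_D)$ built from the ``maximal'' parameters $s_i,t_j$, while the other describes the pair $(U_D,V_D)$ built from the ``minimal'' parameters $\sigma_i,\tau_j$. Lemma~\ref{lem:setequ} tells us that these two pairs are exchanged by the supplement operation $D\longmapsto\overline{D}$, which is exactly the leverage needed.

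First I would record the key swap. By Lemma~\ref{lem:setequ} applied to the demi-matroid $\overline{D}$ (which is a demi-matroid by Theorem~\ref{thm:involution}), we have
\[
S_{\overline{D}} = U_{\overline{\overline{D}}} = U_D
\quad\text{and}\quad
T_{\overline{D}} = V_{\overline{\overline{D}}} = V_D,
\]
where the second equalities use $\overline{\overline{D}}=D$ from Theorem~\ref{thm:involution}. Note also that $\overline{D}$ has the same ground set $E$ of size $n$ and satisfies $\overline{s}(E)=s(E)=k$, so the index ranges match.

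Next I would apply Theorem~\ref{thm:demi-matroidWei-equ} directly to $\overline{D}$ (rather than to $D$) to obtain
\[
S_{\overline{D}} \cup T_{\overline{D}} = \{1,\ldots,n\}
\quad\text{and}\quad
S_{\overline{D}} \cap T_{\overline{D}} = \emptyset.
\]
Substituting $S_{\overline{D}}=U_D$ and $T_{\overline{D}}=V_D$ yields exactly the desired identities $U_D\cup V_D=\{1,\ldots,n\}$ and $U_D\cap V_D=\emptyset$.

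The only potential obstacle is verifying that the substitution is legitimate, namely that $\overline{D}$ is really a demi-matroid with the same parameters $n$ and $k$ appearing in the definitions of $S,T,U,V$. Both points are handed to us: Theorem~\ref{thm:involution} gives the demi-matroid structure, and the identity $\overline{s}(E)=s(E)$ (established in the proof of that theorem) guarantees the index conventions agree. Hence Theorem~\ref{thm:demi-matroidWei} follows as a formal consequence of Theorem~\ref{thm:demi-matroidWei-equ} with no new combinatorial content required.
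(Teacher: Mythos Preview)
Your reduction via the supplement involution is mathematically sound, but in the logical structure of the paper it is circular. Theorems~\ref{thm:demi-matroidWei-equ} and~\ref{thm:demi-matroidWei} are proved jointly, and the paper establishes Theorem~\ref{thm:demi-matroidWei} \emph{first}, by a direct argument: one assumes $\sigma_i = n+1-\tau_j$ for some $i,j$, picks $X$ with $|X|=\tau_j$ and $t(X)\geq j$, and uses (D) together with Lemma~\ref{lem:monotonicity} to derive two inequalities that sum to the contradiction $-1\leq -2$. This shows $U_D\cap V_D=\emptyset$, and a cardinality count (via Lemmas~\ref{lem:monotonicity}) gives $U_D\cup V_D=\{1,\ldots,n\}$. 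Only then is Theorem~\ref{thm:demi-matroidWei-equ} deduced by applying Theorem~\ref{thm:demi-matroidWei} to~$\overline{D}$ and invoking Lemma~\ref{lem:setequ}.

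In other words, you have correctly identified that the two theorems are interchangeable under $D\mapsto\overline{D}$, and your use of Lemma~\ref{lem:setequ} and Theorem~\ref{thm:involution} is exactly right. But the ``new combinatorial content'' you claim is unnecessary has to live somewhere: one of the two theorems must be proved from scratch, and the paper chooses Theorem~\ref{thm:demi-matroidWei}. Your proposal would be a valid proof of Theorem~\ref{thm:demi-matroidWei-equ} once Theorem~\ref{thm:demi-matroidWei} is in hand; as a proof of Theorem~\ref{thm:demi-matroidWei} itself, it presupposes what it needs to establish.
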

\begin{proof}[Proof of Theorems~\ref{thm:demi-matroidWei-equ} and~\ref{thm:demi-matroidWei}]
Assume that there are integers $i,j$ such that $\sigma_i=n+1-\tau_j$.
Let $X\subseteq E$ be a subset satisfying $|X|=\tau_j$ and $t(X) \geq j$.
Then $|E-X|=\sigma_i-1$, so $s(E-X)\leq i-1$ from Lemma \ref{lem:monotonicity}.
By (D),
\[
n-\tau_j-(n-k)+j =-\tau_j+k+j \leq i-1 \,.
\]
Similarly,
\[
n-\sigma_i-k+i \leq j-1 \,.
\]
Hence,
$-1=n-\sigma_i-\tau_j\leq -2$,
a contradiction.
This proves Theorem~\ref{thm:demi-matroidWei}.

To prove Theorem~\ref{thm:demi-matroidWei-equ},
apply Theorem~\ref{thm:demi-matroidWei} to $\overline{D}$,
and use Lemma~\ref{lem:setequ}.
\end{proof}

\begin{example}
\label{exa:dualities}
{\rm
For the demi-matroid $D:=(E,s,t)$ with $E=\{a,b,c\}$, $s(E)=1$, and $s(X)=0$ for~$X\subsetneq E$,
\[
\begin{array}{rlrl}
S_D &= \{1\}\quad & U_D &= \{3\}\,;\\
T_D &= \{2,3\}    & V_D &= \{1,2\}\,.
\end{array}
\]
Thus, $S_D\cup T_D = \{1,2,3\}$
  and $S_D\cap T_D = \emptyset$,
as asserted by Theorem~\ref{thm:demi-matroidWei-equ}.
Similarly,
      $U_D\cup V_D = \{1,2,3\}$
  and $U_D\cap V_D = \emptyset$,
as asserted by Theorem~\ref{thm:demi-matroidWei}.}
\end{example}

\section{Duality theorems for matroids, graphs, transversals, and linear codes}
\label{sec:specialcases}

Let $M=(E,\rho)$ be a matroid of rank~$k:=\rho(M)$ on the set $E$.
For all $i=0,\ldots,k$ and $j=0,\ldots,n-k$,
define
\begin{align*}
f_i  &:= \max \{\, |F|   \::\: F\subseteq E,\: \rho(F)=i\}\,;\\
f^*_j&:= \max \{\, |F|   \::\: F\subseteq E,\: \rho^*(F)=j\}\,.
\end{align*}
Set\vspace{-1mm}
\begin{align*}
S_M  &:= \{n-f_{k-1},\ldots,n-f_1,n-f_0\}\,;\\
T_M  &:= \{f^*_0+1,f^*_1+1,\ldots,f^*_{n-k-1}+1\}\,.
\end{align*}
The following duality result for matroids follows immediately from Theorem~\ref{thm:demi-matroidWei}.
\begin{theorem}
\label{thm:matroidWei}
$S_M\cup T_M = \{1,\ldots,n\}$ and $S_M\cap T_M = \emptyset$.
\end{theorem}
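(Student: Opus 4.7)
The plan is to reduce Theorem~\ref{thm:matroidWei} to Theorem~\ref{thm:demi-matroidWei-equ} applied to the demi-matroid $D := (E,\rho,\rho^*)$ naturally associated with the matroid $M$. The observation that makes this work is that the quantities $f_i$ and $f_j^*$ defined for $M$ are exactly the quantities $s_i$ and $t_j$ defined for $D$, so that the sets $S_M$ and $T_M$ coincide with $S_D$ and $T_D$.

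First I would invoke the paragraph preceding Example~\ref{exa:not_matroid}, which confirms that $D = (E,\rho,\rho^*)$ is a demi-matroid: condition~(R) is the standard monotonicity and sub-cardinality of matroid rank functions (applied to both $\rho$ and $\rho^*$), while condition~(D) is just the defining identity $\rho^*(X) = |X| - \rho(E) + \rho(E-X)$ rewritten. Setting $s := \rho$ and $t := \rho^*$ in the definitions on page~\pageref{equ:stst_def} gives
\begin{align*}
s_i &= \max\{|F|:F\subseteq E,\ \rho(F)\leq i\},\\
t_j &= \max\{|F|:F\subseteq E,\ \rho^*(F)\leq j\}.
\end{align*}
By Lemma~\ref{lem:altdef}, the inequalities in these maxima may be replaced by equalities, yielding $s_i = f_i$ and $t_j = f_j^*$. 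Hence $S_M = S_D$ and $T_M = T_D$, and Theorem~\ref{thm:demi-matroidWei-equ} applied to $D$ now delivers precisely the desired conclusion.

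There is no substantive obstacle here; the content of the matroid statement is packaged inside the demi-matroid theorem, and the proof is essentially a translation between notations. The only minor point to verify is that the maxima defining $f_i$ and $f_j^*$ are attained for every relevant index, which follows from the surjectivity of $\rho$ and $\rho^*$ onto $\{0,\ldots,k\}$ and $\{0,\ldots,n-k\}$ respectively.
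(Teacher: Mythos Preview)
Your proof is correct and matches the paper's approach: the paper simply says the result ``follows immediately from Theorem~\ref{thm:demi-matroidWei},'' which (via Lemma~\ref{lem:setequ}) is equivalent to your direct invocation of Theorem~\ref{thm:demi-matroidWei-equ} for $D=(E,\rho,\rho^*)$. The identifications $s_i=f_i$ and $t_j=f_j^*$ you spell out are exactly the intended translation, and your citation of Theorem~\ref{thm:demi-matroidWei-equ} is arguably the more natural one, since applying Theorem~\ref{thm:demi-matroidWei} directly to $D$ yields only the triviality noted in Remark~\ref{rem:nottwodualities}.
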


\begin{example}
\label{exa:V8}
{\rm The non-representable {\em V{\'a}mos matroid} $V_8$ on $E:=\{1,\ldots,8\}$
has as its bases $\mathcal{B}(M)$ all 4-subsets of~$E$
except for the following:
\[
\{1,2,5,6\},\{1,3,5,7\},\{1,4,5,8\},\{2,3,6,7\},\{2,4,6,8\}\,.
\]
The matroid $M:=V_8$ is simple, self-dual, and paving,
so $$(f_0,f_1,f_2,f_3)=(f^*_0,f^*_1,f^*_2,f^*_3)=(0,1,2,4).$$
It follows that
$S_M = \{4,6,7,8\}$ and $T_M = \{1,2,3,5\}$.
Thus,
$S_M\cap T_M = \emptyset$
and
$S_M\cup T_M = \{1,\ldots,8\}$,
as~asserted by Theorem~\ref{thm:matroidWei}.}
\end{example}

\begin{remark}
\label{rem:nottwodualities}{\rm
The demi-matroid $D:=(E,\rho,\rho^*)$ satisfies
$\sigma_i=i$ and $\tau_j=j$ for all
$i=0,\ldots,k$ and $j=0,\ldots,n-k$.
Thus, $U_D=\{1,2,\ldots,k\}$ and $V_D=\{k+1,\ldots,n-1,n\}$.
Hence, there is no interesting matroid analogue of Theorem~\ref{thm:demi-matroidWei-equ}.}
\end{remark}
The coefficients $f_i$ and $f^*_j$ can fairly easily be re-expressed in terms of cocircuits and circuits (see~\cite[p.~306]{white86}):
\begin{align}
\notag
n-f_{i-1}   = \min\bigl\{|X|\,:\; &X=\bigcup_{j=1}^iB_j\textrm{ where, for all $j\leq i$,}\\\label{equ:1}\tag{F}
 &B_j\in\mathcal{C}^*(M)\,,\; B_j\nsubseteq\bigcup_{k\neq j}B_k\bigr\}\,;\!\!\\\notag
n-f^*_{j-1} = \min\bigl\{|X|\::\: &X=\bigcup_{i=1}^jC_i\textrm{ where, for all $i\leq j$,}\\\label{equ:2}\tag{F${}^*$}
 &C_i\in\mathcal{C}(M)  \,,\; C_i\nsubseteq\bigcup_{k\neq i}C_k\bigr\}\,.
\end{align}
These identities will be used in the subsections below.

\subsection{Perfect matroid designs}
\label{sec:pmd}

A {\em perfect matroid design} is a matroid~$M$
in which the cardinality of each closed set is determined uniquely by its rank
(see~\cite[Chapter~12]{welsh76}).
If the rank of a closed set $F$ of $M$ is~$i$, then $|F|=f_i$.
Theorem~\ref{thm:matroidWei} immediately implies the following result.
\begin{corollary}
\label{cor:designmatroids}
The cardinalities of the closed sets of a perfect matroid design~$M$
are uniquely determined by the closed set cardinalities of~$M^*$.
\end{corollary}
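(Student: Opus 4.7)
The plan is to reduce the corollary to the matroid duality given by Theorem~\ref{thm:matroidWei}. The key observation is that in a PMD, the sequence $f_0 < f_1 < \cdots < f_k$ coincides with the set of distinct cardinalities of closed sets of $M$: indeed, the closure of any subset of rank $i$ of maximum cardinality $f_i$ is a closed set of rank $i$, and in a PMD all closed sets of rank $i$ share the same cardinality, which must therefore be $f_i$.

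First I would invoke the classical fact that the dual of a perfect matroid design is itself a perfect matroid design. By the same reasoning as above applied to $M^*$, the set of closed-set cardinalities of $M^*$ is then precisely $\{f^*_0,f^*_1,\ldots,f^*_{n-k}\}$, so this data is extracted simply by listing distinct values in increasing order.

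Once the $f^*_j$ are known, form $T_M=\{f^*_0+1,f^*_1+1,\ldots,f^*_{n-k-1}+1\}$. Theorem~\ref{thm:matroidWei} asserts that $S_M$ is the complement of $T_M$ in $\{1,\ldots,n\}$, so $S_M$ is uniquely determined. Reading off $S_M=\{n-f_{k-1},\ldots,n-f_1,n-f_0\}$ in increasing order (using the strict monotonicity $f_0<f_1<\cdots<f_{k-1}$ which follows from Lemma~\ref{lem:monotonicity-min}), I recover $f_0,\ldots,f_{k-1}$; the top value $f_k=n$ is trivial.

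The only non-trivial obstacle is the auxiliary statement that $M^*$ is again a PMD, without which the set of closed-set cardinalities of $M^*$ might not directly yield the values $f^*_j$. This is a classical result and can be cited; once granted, the remainder of the argument is an immediate bookkeeping application of Theorem~\ref{thm:matroidWei}.
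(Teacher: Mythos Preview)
Your overall strategy---reduce to Theorem~\ref{thm:matroidWei}---is exactly the paper's one-line derivation. However, the auxiliary claim you invoke is false: the dual of a perfect matroid design need \emph{not} be a PMD. Take $M=PG(2,2)$, the Fano plane, a PMD of rank~$3$ on $7$ points with $(f_0,f_1,f_2,f_3)=(0,1,3,7)$. The circuits of $M$ are the seven lines (size~$3$) together with the seven line-complements (size~$4$), so the hyperplanes of $M^*$---being complements of circuits of~$M$---have sizes $4$ and~$3$. Thus $M^*$ possesses rank-$3$ flats of two distinct cardinalities and is not a PMD. The ``classical result'' you propose to cite does not exist.

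You have, however, correctly identified a genuine subtlety: if ``closed set cardinalities of~$M^*$'' is read as the bare set $\{\,|F|:F\text{ a flat of }M^*\}$, one cannot directly extract the numbers $f^*_j$, and your proposed remedy fails. The paper sidesteps this by (implicitly) taking the input data to be the values $f^*_0,\ldots,f^*_{n-k}$ themselves---the maximum flat size at each rank---which are well-defined for any matroid whether or not it is a PMD. Under that reading, Theorem~\ref{thm:matroidWei} yields $S_M$ as the complement of $T_M=\{f^*_0+1,\ldots,f^*_{n-k-1}+1\}$ in $\{1,\ldots,n\}$, and your bookkeeping then recovers $f_0,\ldots,f_{k-1}$ exactly as you describe (with $f_k=n$ trivial). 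So the argument is salvaged not by proving that $M^*$ is a PMD, but by interpreting the hypothesis as supplying the $f^*_j$ directly.
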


\subsection{Graphs}
\label{sec:graphs}

Let $G$ be a (multi)graph on~$n$ edges whose spanning forests each contains $k$ edges.
Recall that a {\em bond} of $G$ is a minimal cut-set of edges of $G$.
For each $i=1,\dots,k$ and $j=1,\ldots,n-k$,
define
\begin{align*}
b_i := &\textrm{ minimal number of edges in a union of $i$ bonds,}\\
       &\textrm{ none contained in the union of the others}\,;\\[1mm]
c_j := &\textrm{ minimal number of edges in a union of $j$ cycles,}\\
       &\textrm{ none contained in the union of the others.}
\end{align*}
Consider the cycle matroid $M:=M(G)$ and its coefficients $f_i$ and $f^*_j$.
Equations (\ref{equ:1}) and (\ref{equ:2}) immediately imply the following result:
\begin{proposition}
\label{prop:n-f}
$b_i:=n-f_{i-1}$ and $c_j:=n-f^*_{j-1}$.
\end{proposition}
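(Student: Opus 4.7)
The plan is to simply unwind the definitions, using the standard correspondence between graph-theoretic notions and matroid-theoretic notions for the cycle matroid $M(G)$. The identities (F) and (F${}^*$) already express $n-f_{i-1}$ and $n-f^*_{j-1}$ as minima over unions of $i$ cocircuits (respectively, $j$ circuits) of $M(G)$ subject to the irredundancy condition that no one of them is contained in the union of the others. The definitions of $b_i$ and $c_j$ are of exactly the same shape, but phrased in terms of bonds and cycles of $G$.

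So the first step is to recall the standard matroid-theoretic identifications for the cycle matroid $M(G)$ of a (multi)graph: the circuits of $M(G)$ are precisely the edge sets of cycles of $G$, and the cocircuits of $M(G)$ are precisely the bonds (minimal edge cut-sets) of $G$. This is a standard fact, for which it suffices to cite any of the matroid-theory references in the paper.

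Once these identifications are in hand, substituting $\mathcal{C}^*(M)=\{\textrm{bonds of $G$}\}$ into (F) turns the right-hand side of (F) into exactly the minimum defining $b_i$, giving $b_i=n-f_{i-1}$. Substituting $\mathcal{C}(M)=\{\textrm{cycles of $G$}\}$ into (F${}^*$) turns its right-hand side into the minimum defining $c_j$, yielding $c_j=n-f^*_{j-1}$.

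There is no real obstacle here; the proof is essentially a dictionary lookup. The only point that might warrant a word of comment is that the irredundancy condition $B_j\nsubseteq\bigcup_{k\neq j}B_k$ in (F) is literally the phrase ``none contained in the union of the others'' in the definition of $b_i$, so no rewording beyond the graph/matroid translation is needed.
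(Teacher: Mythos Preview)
Your proposal is correct and matches the paper's own approach exactly: the paper simply states that the proposition follows immediately from equations~(\ref{equ:1}) and~(\ref{equ:2}), which is precisely the dictionary translation (bonds $\leftrightarrow$ cocircuits, cycles $\leftrightarrow$ circuits) you spell out. There is nothing to add.
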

\begin{corollary}
\label{cor:forests}
The maximal number of edges in any subgraph of $G$ whose spanning forests each contain $i-1$ edges
is $n-b_i$.
Similarly,
$n-c_j$ is the maximal size of an edge set $E'\subseteq E(G)$ for which
$G\backslash E''$ does not span $G$ for any $j$-element subset $E''\subseteq E'$.
\end{corollary}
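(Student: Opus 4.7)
The plan is to reduce both assertions directly to Proposition~\ref{prop:n-f} by re-expressing $f_{i-1}$ and $f^*_{j-1}$ in graph-theoretic language.

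For the first assertion, I would recall that in the cycle matroid $M(G)$ the rank $\rho(F)$ of an edge set $F\subseteq E(G)$ equals the common cardinality of the spanning forests of the subgraph with edge set~$F$. Hence
\[
f_{i-1}=\max\{\,|F|\::\:F\subseteq E(G),\ \rho(F)=i-1\,\}
\]
is precisely the maximum number of edges in a subgraph of $G$ whose spanning forests each contain $i-1$ edges. Since $n-b_i=f_{i-1}$ by Proposition~\ref{prop:n-f}, the first assertion follows at once.

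For the second assertion, I would use matroid duality to interpret $\rho^*$. A subset $E''\subseteq E(G)$ is independent in $M^*(G)$ if and only if $E(G)\setminus E''$ contains a basis of $M(G)$, i.e., if and only if $G\setminus E''$ spans~$G$. Consequently, $\rho^*(E')\geq j$ holds exactly when some $j$-element subset $E''\subseteq E'$ has $G\setminus E''$ spanning~$G$; negating, $\rho^*(E')\leq j-1$ holds exactly when $G\setminus E''$ fails to span~$G$ for every $j$-element $E''\subseteq E'$. Invoking Lemma~\ref{lem:altdef} to rewrite $f^*_{j-1}=\max\{\,|E'|:\rho^*(E')\leq j-1\,\}$, we see that $f^*_{j-1}$ is exactly the maximum size of such an~$E'$. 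Combined with $n-c_j=f^*_{j-1}$ from Proposition~\ref{prop:n-f}, this yields the second assertion.

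The work here is conceptual rather than computational: the one small obstacle is to pin down, via matroid duality, the equivalence that $G\setminus E''$ spans $G$ if and only if $E''$ is independent in $M^*(G)$. Once this identification is in hand, both halves of the corollary fall out of Proposition~\ref{prop:n-f}, with Lemma~\ref{lem:altdef} providing the harmless replacement of $=$ by $\leq$ in the defining optimizations.
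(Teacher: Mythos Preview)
Your proposal is correct and is precisely the argument the paper leaves implicit: the corollary is stated without proof, immediately after Proposition~\ref{prop:n-f}, and your unpacking of $f_{i-1}$ and $f^*_{j-1}$ in graph-theoretic terms (using that rank in $M(G)$ is spanning-forest size, and that $E''$ is coindependent iff $G\setminus E''$ spans $G$) is exactly the intended reading. The appeal to Lemma~\ref{lem:altdef} to pass from $\rho^*(E')=j-1$ to $\rho^*(E')\leq j-1$ is the right way to make the second half rigorous.
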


Set
\begin{align*}
U_G &:= \{b_1,\ldots,b_k\}\,;\\
V_G &:= \{n+1-c_{n-k},\ldots,n+1-c_1\}\,.
\end{align*}
The next result was proved in \cite{britz07a}
and follows immediately from Theorem~\ref{thm:matroidWei} and Proposition~\ref{prop:n-f}.
\begin{theorem}
\label{thm:weigraph}
$U_G\cup V_G=\{1,\ldots,n\}$ and $U_G\cap V_G=\emptyset$.
\end{theorem}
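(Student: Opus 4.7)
The plan is to reduce Theorem~\ref{thm:weigraph} to the already-established Theorem~\ref{thm:matroidWei} applied to the cycle matroid $M := M(G)$, using Proposition~\ref{prop:n-f} as the translation dictionary. The proof amounts to checking that $U_G$ and $V_G$ coincide with $S_M$ and $T_M$ respectively, after which the conclusion is immediate.

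First I would unwind the definitions. By Proposition~\ref{prop:n-f}, $b_i = n - f_{i-1}$ for each $i = 1,\ldots,k$, so
\[
U_G = \{b_1,\ldots,b_k\} = \{n - f_0,\, n - f_1,\, \ldots,\, n - f_{k-1}\} = S_M.
\]
Similarly, $c_j = n - f^*_{j-1}$ for each $j = 1,\ldots,n-k$, so each element $n+1 - c_j$ appearing in $V_G$ equals $n+1 - (n - f^*_{j-1}) = f^*_{j-1} + 1$, giving
\[
V_G = \{n+1 - c_{n-k},\ldots,n+1-c_1\} = \{f^*_{n-k-1}+1,\,\ldots,\,f^*_0+1\} = T_M.
\]

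With these identifications in hand, Theorem~\ref{thm:matroidWei} applied to $M = M(G)$ yields $U_G \cup V_G = S_M \cup T_M = \{1,\ldots,n\}$ and $U_G \cap V_G = S_M \cap T_M = \emptyset$, which is exactly the desired conclusion.

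There is no real obstacle here: the substantive content has been packaged into Theorem~\ref{thm:matroidWei} and Proposition~\ref{prop:n-f}, and the only thing requiring care is the bookkeeping with the index shift (the $i-1$ versus $i$ in $b_i = n - f_{i-1}$, and the $n+1$ versus $n$ in the definition of $V_G$). As long as one writes out both sets explicitly, as above, the indices line up and the result follows in a single line.
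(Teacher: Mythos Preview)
Your proposal is correct and matches the paper's own argument exactly: the paper states that the result follows immediately from Theorem~\ref{thm:matroidWei} and Proposition~\ref{prop:n-f}, and your write-up simply makes the index bookkeeping explicit in verifying $U_G=S_M$ and $V_G=T_M$.
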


\begin{example}
\label{exa:cyclesandbonds}
{\rm The graph $G$ below has $n=5$ edges and each of its spanning forests has $k=3$ edges:
\setlength{\unitlength}{5pt}
\begin{center}
\begin{picture}(10,13)(0,-1)
  \put(5,5){\arc(5,0){360}}
  \curve(0,5,10,5)
  \multiput(5,0)(0,10){2}{\circle*{.9}}
  \multiput(0,5)(10,0){2}{\circle*{.9}}
\end{picture}
\end{center}
For this graph, $(b_1,b_2,b_3)=(2,4,5)$ and $(c_1,c_2)=(3,5)$.
Set\vspace{-1mm}
\begin{align*}
U_G &:= \{b_1,b_2,b_3\}=\{2,4,5\}\,;\\
V_G &:= \{n+1-c_2,n+1-c_1\}=\{1,3\}\,.
\end{align*}
Then $U_G\cup V_G=\{1,2,3,4,5\}$ and $U_G\cap V_G=\emptyset$,
as asserted by Theorem~\ref{thm:weigraph}.}
\end{example}

\begin{example}
\label{exa:complete}
{\rm For the complete graph $G:=K_m$ on $m$ vertices,
the number of edges is $n=\binom{m}{2}$,
and each spanning tree contains $k=m-1$ edges.
In~\cite{heiseldal08}, it was shown that
\begin{align*}
\{b_1,\ldots,b_k\}     &= \textstyle\{n-\binom{i}{2}\::\: i=1,\ldots,k\}\,;\\
\{c_1,\ldots,c_{n-k}\} &= \textstyle\{1,\ldots,n\}\backslash\bigl\{\binom{i}{2}+1\::\: i=1,\ldots,k\bigr\}\,,
\end{align*}
respectively,
so $U_G\cup V_G=\{1,\ldots,n\}$ and $U_G\cap V_G=\emptyset$,
as asserted by Theorem~\ref{thm:weigraph}.}
\end{example}

\begin{example}
\label{exa:bipartite}
{\rm For the complete bipartite graph $G:=K_{l,m}$ with $l\geq m$,
the number of edges is $n=lm$,
and each spanning tree contains $k=l+m-1$ edges.
In~\cite{heiseldal08}, it was shown that
\begin{align*}
\{b_1,\ldots,b_k\}     ={} &\textstyle \{m,2m,\ldots,(l-m)m\}\,\cup\\
    &\textstyle \bigl\{n-\bigl\lfloor\frac{i^2}{4}\bigr\rfloor\::\:i=1,\ldots,2m-1\bigr\}\,;\\
\{c_1,\ldots,c_{n-k}\} ={} &\textstyle \{1,\ldots,n\}\backslash\\
    &\bigl(\{n+1-im\::\: i=1,\ldots,l-m\}\,\cup\\
    &\textstyle\:\:\bigl\{\bigl\lfloor\frac{i^2}{4}\bigr\rfloor+1\::\:i=1,\ldots,2m-1\bigr\}\bigr)\,,
\end{align*}
respectively,
so $U_G\cup V_G=\{1,\ldots,n\}$ and $U_G\cap V_G=\emptyset$,
as asserted by Theorem~\ref{thm:weigraph}.}
\end{example}

\subsection{Transversals}
\label{sec:matchings}

Let ${\cal A}:=\{A_1,\ldots,A_m\}$ be a multiset of subsets $A_j\subseteq E$.
A {\em transversal} of~${\cal A}$ is a set $T\subseteq E$ of size $|T|=|{\cal A}|$
for which the elements of $T$ may be labeled $e_1,\ldots,e_m$
so that $e_j\in A_j$ for each $j=1,\ldots,m$.
A {\em partial transversal} of ${\cal A}$ is a transversal of a sub-multiset of~${\cal A}$.
The partial transversals of $\cal A$ form the independent sets of the {\em transversal matroid} of~${\cal A}$,
denoted by $M[{\cal A}]$ (cf.~\cite[Section 1.6]{oxley06}).
A set $X\subseteq E$ is a {\em plug} for $\cal A$ if
$X-e$ is a partial transversal of $\cal A$ for each $e\in X$
but $X$ itself is not.
Let $k$ denote the maximal size of a partial transversal of $\cal A$, that is, the rank of $M[{\cal A}]$.
For each $i=0,\dots,k-1$ and $j=1,\ldots,n-k$,
define
\begin{align*}
m_i &:= \max \{\,|X| \::\: X\subseteq E,\:\textrm{$X$ contains a partial transversal}\\
    &     \qquad\qquad\qquad\quad \textrm{of $\cal A$ of size $i$ but none of size $i+1$}\}\,;\\
p_j &:= \textrm{minimal size of a union of $j$ plugs for $\cal A$,}\\
    &   \quad\:\:\: \textrm{none contained in the union of the others.}
\end{align*}
Set\vspace{-1mm}
\begin{align*}
U_{\cal A} &:= \{m_0+1,\ldots,m_{k-1}+1\}\,;\\
V_{\cal A} &:= \{p_1,\ldots,p_{n-k}\}\,.
\end{align*}
\begin{theorem}
\label{thm:weimatching}
$U_{\cal A}\cup V_{\cal A}=\{1,\ldots,n\}$ and $U_{\cal A}\cap V_{\cal A}=\emptyset$.
\end{theorem}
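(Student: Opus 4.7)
The plan is to translate the theorem into matroid language via the transversal matroid $M := M[{\cal A}]$ and then invoke Theorem~\ref{thm:matroidWei}; throughout I use the parameters $f_i$ and $f_j^*$ from the start of Section~\ref{sec:specialcases}, applied to $M$.

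The first step is to establish $m_i = f_i$ for $i = 0, \ldots, k-1$. This is essentially by definition of the transversal matroid: since the independent sets of $M$ are precisely the partial transversals of $\cal A$, the rank $\rho(X)$ of $X$ in $M$ equals the maximum size of a partial transversal contained in $X$, so the condition ``$X$ contains a partial transversal of size $i$ but none of size $i+1$'' is exactly $\rho(X) = i$. Consequently $U_{\cal A} = \{f_0+1, \ldots, f_{k-1}+1\}$.

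The second step is to observe that the plugs of $\cal A$ are precisely the circuits of $M$: the requirements ``$X$ is not a partial transversal'' and ``$X-e$ is a partial transversal for every $e\in X$'' express exactly that $X$ is a minimal dependent set of $M$. Identity~(\ref{equ:2}) applied to $M$ then yields $p_j = n - f^*_{j-1}$ for $j = 1, \ldots, n-k$, and hence $V_{\cal A} = \{n - f^*_0, \ldots, n - f^*_{n-k-1}\}$.

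Finally, I would apply Theorem~\ref{thm:matroidWei} to the dual matroid $M^*$, which has rank $n-k$. Because matroid duality swaps the $f$- and $f^*$-parameters, a direct computation shows that $S_{M^*} = V_{\cal A}$ and $T_{M^*} = U_{\cal A}$, so the required partition follows at once. The main obstacle here is purely bookkeeping---keeping careful track of the index shift under duality; the two content-bearing identifications ``$m_i = f_i$'' and ``plug $=$ circuit'' are immediate from the definitions of transversal matroid and matroid circuit.
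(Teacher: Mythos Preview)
Your proof is correct and follows essentially the same route as the paper: the paper's proof also consists of identifying $m_i=f_i$ and (via the plug $=$ circuit observation and (\ref{equ:2})) $p_j=n-f^*_{j-1}$, then invoking Theorem~\ref{thm:matroidWei}. The only cosmetic difference is in the last step---rather than passing to $M^*$, one can apply Theorem~\ref{thm:matroidWei} directly to $M$ and note that $U_{\cal A}=\{\,n+1-s:s\in S_M\}$ and $V_{\cal A}=\{\,n+1-t:t\in T_M\}$, so the bijection $x\mapsto n+1-x$ on $\{1,\ldots,n\}$ carries the partition $(S_M,T_M)$ to $(U_{\cal A},V_{\cal A})$; this is equivalent to your dualization.
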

\begin{proof}
For $M:=M[{\cal A}]$,
$m_i=f_i$ and $p_j=n-f^*_{j-1}$, by~(\ref{equ:2}).
Apply Theorem~\ref{thm:matroidWei}.
\end{proof}
\begin{example}
\label{exa:trasnversal}
{\rm Let $E:=\{a,b,c,d,e\}$ and
\[
{\cal A}:=\bigl\{\{a,b\},\{a,c\},\{d\},\{d\}\bigr\}\,.
\]
Then
$U_{\cal A}=\{2,4,5\}$ and $V_{\cal A}=\{1,3\}$,
so $U_{\cal A}\cap V_{\cal A}=\emptyset$ and $U_{\cal A}\cup V_{\cal A}=\{1,2,3,4,5\}$,
as claimed by Theorem~\ref{thm:weimatching}.}
\end{example}

\subsection{Codes over division rings}
\label{sec:divisionrings}
Let $R$ denote a division ring (perhaps a field) and set $E:=\{1,2,\ldots,n\}$.
The {\em support} of each codeword $\mathbf{x}:=(x_1,x_2,\ldots,x_n) \in R^n$ is the set
\[
\textrm{supp}(\mathbf{x}) := \{\,i \:\::\:\: x_i \neq 0\}\,.
\]
Similarly, the {\em support} and {\em weight} of each subset $D\subseteq R^n$ are defined as follows:
\begin{align*}
\textrm{Supp}(D)  &:= \bigcup_{\mbox{\small$\mathbf{x}\in D$}} \textrm{supp}(\mathbf{x})\,;\\
\textrm{wt}(D)    &:= |\,\textrm{Supp}(D) |\,.
\end{align*}
Let $C$ be a right linear $[n,k]$ code over~$R$ with coordinates~$E$.
The {\em dual code} $C^\perp$ is given as follows:
\[
C^\perp=\{\mathbf{y}\in R^n \::\: \mathbf{x} \cdot \mathbf{y} =0\,, \: {}^\forall \mathbf{x} \in C\}.
\]
For each integer $i=1,\ldots,k$ ($j=1,\ldots,n-k$),
define the $i$th ($j$th) {\em generalized Hamming weight} of $C$ ($C^\perp$) as follows:
\begin{align*}
d_i       &:= \min\bigl\{\textrm{wt}(D)\,:\, \textrm{$D$ a right linear $[n,i]$ subcode of $C$}\}\,;\\
d^\perp_j &:= \min\bigl\{\textrm{wt}(D)\,:\, \textrm{$D$ a right linear $[n,j]$ subcode of $C^\perp$}\}.
\end{align*}
For any subset $X\subseteq E$,
the {\em punctured code} $C\backslash X$ is the right linear code
obtained by deleting the coordinates $X$ from each codeword of~$C$.
Also, $C(X)$ is the right linear subcode of $C$ consisting of all codewords $\mathbf{x}\in C$ with $\textrm{supp}(\mathbf{x}) \subseteq X$.
Note that $k = \dim C = \dim C\backslash X + \dim C(X)$.

Define the function $\rho_C \::\: 2^E \longmapsto \mathbb{N}_0$ by
\[
\rho_C(X) := \dim C\backslash (E-X)\,.
\]
This is the rank function of the vector matroid $M_C=(E,\rho_C)$.
Define $\rho_{C^\perp}$ similarly for $C^\perp$ and note that $\rho^*_C=\rho_{C^\perp}$.
Hence,
\begin{theorem}
\label{thm:code-demi-matroid}
$D_C:=(E,\rho_C,\rho_{C^\perp})$ is a demi-matroid.
\end{theorem}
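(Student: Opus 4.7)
The plan is to invoke the observation noted just after the definition of a demi-matroid: for any matroid with rank function $\rho$, the triple $(E,\rho,\rho^*)$ is a demi-matroid. Thus it suffices to verify two things, namely that $\rho_C$ is the rank function of a matroid on $E$, and that $\rho_{C^\perp}$ agrees with the dual rank function $\rho_C^*$.

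For the first claim, I would fix a $k\times n$ matrix $G$ over $R$ whose rows generate $C$ as a right $R$-module. For each $X\subseteq E$, the punctured code $C\backslash(E-X)$ is the right row-module of the submatrix $G[X]$ obtained by keeping only those columns of $G$ indexed by $X$. Hence $\rho_C(X)$ equals the rank of $G[X]$, which, by the equality of row rank and column rank over a division ring, coincides with the size of a largest left-linearly independent subset of the columns of $G$ lying in $X$. This is precisely the rank function of the vector matroid on the columns of $G$, so $(E,\rho_C)$ is a matroid.

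For the second claim, I would establish the duality identity $\dim C^\perp(X) + \rho_C(X) = |X|$ for every $X\subseteq E$. This can be proved by observing that restriction of codewords to the coordinates in $X$ identifies $C^\perp(X)$ with the orthogonal complement of the punctured code $C\backslash(E-X)$ inside $R^X$; the usual dimension count then yields the identity. Substituting this into the definition of $\rho_{C^\perp}$ and using $\dim C^\perp = n-k$ gives
\[
\rho_{C^\perp}(X) = (n-k) - \dim C^\perp(E-X) = |X| - k + \rho_C(E-X) = \rho_C^*(X),
\]
which is exactly what is needed. The principal subtlety, and the only place where one must truly pause, is the bookkeeping of left versus right module structures over the noncommutative ring $R$: the code $C$ is a right $R$-module while its columns span a left $R$-module, and the orthogonality pairing mixes the two. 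Once this bookkeeping is handled carefully, the standard arguments (Steinitz exchange, equality of row and column rank, dimension of an orthogonal complement) extend from fields to division rings without essential change.
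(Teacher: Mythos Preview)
Your proposal is correct and follows precisely the paper's approach: the paper simply asserts that $\rho_C$ is the rank function of the vector matroid $M_C$ and that $\rho_C^*=\rho_{C^\perp}$, then writes ``Hence'' before stating the theorem. You supply the details behind both assertions---the column-matroid description of $\rho_C$ and the computation $\rho_{C^\perp}(X)=|X|-k+\rho_C(E-X)=\rho_C^*(X)$ via the identity $\dim C^\perp(X)+\rho_C(X)=|X|$---and your cautionary remark about left/right module bookkeeping over a noncommutative $R$ is appropriate, since the paper does not address it.
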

Consider the numbers $\sigma_i$ and $\tau_j$ for $D_C$.
\begin{proposition}
\label{prop:d_i=sigma_i}
The following identities hold:\vspace{-1mm}
\begin{align*}
d_i &=\overline{\sigma}_i=n-s_{k-i}\,;\\
d^\perp_j&=\overline{\tau}_j=n-t_{n-k-j}\,.
\end{align*}
\end{proposition}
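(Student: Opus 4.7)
My plan is to split each line of the proposition into two equalities. The right-hand identities $\overline{\sigma}_i = n - s_{k-i}$ and $\overline{\tau}_j = n - t_{n-k-j}$ follow immediately by rearranging the first and third identities of Lemma \ref{lem:equ}. The remaining content is thus $d_i = \overline{\sigma}_i$ and $d_j^\perp = \overline{\tau}_j$; I would prove the first in detail, the second being entirely analogous with $C$ replaced by $C^\perp$.

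The key step is to translate $\overline{s}$ into code-theoretic terms. For $D_C = (E,\rho_C,\rho_{C^\perp})$, the definition $\overline{s}(X) = s(E) - s(E-X) = k - \rho_C(E-X)$, combined with $\rho_C(E-X) = \dim C\backslash X$ and the identity $\dim C = \dim C\backslash X + \dim C(X)$ noted just before Theorem \ref{thm:code-demi-matroid}, yields the clean formula $\overline{s}(X) = \dim C(X)$. The same computation applied to $\overline{t}$ gives $\overline{t}(X) = \dim C^\perp(X)$.

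Since $\overline{D_C}$ is a demi-matroid by Theorem \ref{thm:involution}, Lemma \ref{lem:altdef} applied to it gives $\overline{\sigma}_i = \min\{|X| : \overline{s}(X) \geq i\} = \min\{|X| : \dim C(X) \geq i\}$. I would then match this quantity to $d_i$ by a two-sided inequality. For $d_i \geq \overline{\sigma}_i$: given any $[n,i]$-subcode $D \subseteq C$, the set $X := \mathrm{Supp}(D)$ satisfies $D \subseteq C(X)$, so $\dim C(X) \geq i$ and $\mathrm{wt}(D) = |X|$. For $d_i \leq \overline{\sigma}_i$: given any $X$ with $\dim C(X) \geq i$, choose an $i$-dimensional right subspace $D \subseteq C(X)$ (possible since $R$ is a division ring, so subcodes are genuine subspaces); then $\mathrm{Supp}(D) \subseteq X$, whence $\mathrm{wt}(D) \leq |X|$.

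The corresponding argument for $d_j^\perp = \overline{\tau}_j$ proceeds identically, using $\overline{t}(X) = \dim C^\perp(X)$. The only step that genuinely requires care is the translation $\overline{s}(X) = \dim C(X)$, which hinges on correctly composing the definition of $\overline{s}$ with the complementary-dimension identity; once that is in hand, the proposition unpacks as a routine matching of definitions, so I do not anticipate any serious obstacle.
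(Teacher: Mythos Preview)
Your proposal is correct and follows essentially the same approach as the paper: both establish $d_i=\overline{\sigma}_i$ via the identification $\overline{\rho_C}(X)=k-\rho_C(E-X)=\dim C(X)$ and a two-sided inequality using $X=\operatorname{Supp}(D)$ in one direction and the subcode $C(X)$ in the other, then invoke Lemma~\ref{lem:equ} for $\overline{\sigma}_i=n-s_{k-i}$. The only cosmetic difference is that the paper works with the ``$=$'' definition of $\overline{\sigma}_i$ and takes $D:=C(X)$ directly, whereas you pass to the ``$\geq$'' characterization via Lemma~\ref{lem:altdef} and select an $i$-dimensional subspace of $C(X)$; both are fine over a division ring.
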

\begin{proof}
Let $D$ be a right linear $[n,i]$ subcode of $C$ with $\textrm{wt}(D)=d_i$,
and set $X=\textrm{Supp}(D)$.
Then
\[
\overline{\rho_C}(X) = k-\rho_C(E-X)\geq
\dim D =i\,,
\]
so $d_i = |X| \geq \overline{\sigma}_i$.

Conversely,
let $X\subseteq E$ be a subset with $\overline{\rho_C}(X)=i$ and $|X|=\overline{\sigma}_i$.
Then $\rho_C(E-X)=k-i$.
Now, $C(X)$ is a right linear subcode of $C$ with
$\dim C(X) = k-\rho_C(E-X) = i$ and
$\textrm{wt}(C(X))\leq |X|$.
Hence, $d_i\leq |X| = \overline{\sigma}_i$.

It follows that $d_i=\overline{\sigma}_i$.
Similarly, $d^\perp_j=\overline{\tau}_j$,
and Lemma~\ref{lem:equ} concludes the proof.
\end{proof}
Set\vspace{-1mm}
\begin{align*}
U_C &:= \{d_1,\ldots,d_k\}\,;\\
V_C &:= \{n+1-d_{n-k}^\perp,\ldots,n+1-d_1^\perp\}\,.
\end{align*}

The result below generalizes Wei's Duality Theorem;
the former specializes to the latter when $R$ is a finite field.
\begin{theorem}
\label{thm:wei}
$U_C\cup V_C=\{1,\ldots,n\}$ and $U_C\cap V_C=\emptyset$.
\end{theorem}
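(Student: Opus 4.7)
The plan is to reduce Theorem~\ref{thm:wei} to Theorem~\ref{thm:demi-matroidWei-equ} (or equivalently to Theorem~\ref{thm:demi-matroidWei} applied to $\overline{D_C}$) by showing that, under the natural identifications already established, $U_C$ and $V_C$ coincide with the sets $S_{D_C}$ and $T_{D_C}$ from Section~\ref{sec:defnnot}.

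First I would invoke Theorem~\ref{thm:code-demi-matroid} to regard $D_C=(E,\rho_C,\rho_{C^\perp})$ as a demi-matroid, so that all four families of parameters $\sigma_i,\tau_j,s_i,t_j$ are defined and Theorem~\ref{thm:demi-matroidWei-equ} applies to it. Then I would translate $U_C$ and $V_C$ into the demi-matroid language via Proposition~\ref{prop:d_i=sigma_i}. The identity $d_i=n-s_{k-i}$ gives
\[
U_C=\{d_1,\ldots,d_k\}=\{n-s_{k-1},\ldots,n-s_1,n-s_0\}=S_{D_C},
\]
while the identity $d^\perp_j=n-t_{n-k-j}$ gives $n+1-d^\perp_j=t_{n-k-j}+1$, so that
\[
V_C=\{n+1-d^\perp_{n-k},\ldots,n+1-d^\perp_1\}=\{t_0+1,t_1+1,\ldots,t_{n-k-1}+1\}=T_{D_C}.
\]
With these two identifications in hand, Theorem~\ref{thm:demi-matroidWei-equ} applied to $D_C$ delivers $U_C\cup V_C=S_{D_C}\cup T_{D_C}=\{1,\ldots,n\}$ and $U_C\cap V_C=S_{D_C}\cap T_{D_C}=\emptyset$, which is exactly the statement to be proved.

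There is no real obstacle here: the conceptual work has already been done in establishing that $D_C$ is a demi-matroid (Theorem~\ref{thm:code-demi-matroid}) and that the generalized Hamming weights admit the two dual descriptions in Proposition~\ref{prop:d_i=sigma_i}. The only point requiring minor care is bookkeeping with the reversed indexing in the definitions of $S_D$ and $V_C$, which is what makes the substitution $j\leftrightarrow n-k-j$ come out correctly. Alternatively, one could use $d_i=\overline{\sigma}_i$ and $d^\perp_j=\overline{\tau}_j$ to recognise $U_C=U_{\overline{D_C}}$ and $V_C=V_{\overline{D_C}}$, and then apply Theorem~\ref{thm:demi-matroidWei} to $\overline{D_C}$; Lemma~\ref{lem:setequ} shows these two routes are the same.
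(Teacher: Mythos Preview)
Your proposal is correct and matches the paper's own proof, which simply cites Theorems~\ref{thm:demi-matroidWei} and~\ref{thm:code-demi-matroid} together with Proposition~\ref{prop:d_i=sigma_i}. The only cosmetic difference is that you route the argument primarily through Theorem~\ref{thm:demi-matroidWei-equ} (via $U_C=S_{D_C}$, $V_C=T_{D_C}$), whereas the paper cites Theorem~\ref{thm:demi-matroidWei} (implicitly applied to $\overline{D_C}$); as you already observe, Lemma~\ref{lem:setequ} makes the two routes identical.
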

\begin{proof}
The theorem follows
from Theorems~\ref{thm:demi-matroidWei} and~\ref{thm:code-demi-matroid} and Proposition~\ref{prop:d_i=sigma_i}.
\end{proof}

\begin{example}
\label{exa:codeWei3}
{\rm
Consider the linear code $C$ generated by the following binary matrix:
\[
\begin{pmatrix}
1&0&1&0&0\\
0&1&1&0&0\\
0&0&0&1&1
\end{pmatrix}
\]
In this case,
$R=\textrm{GF}(2)$, $E=\{1,2,3,4,5\}$, $n=5$, and~$k=3$.
Furthermore,
$(d_1,d_2,d_3)=(2,3,5)$ and $(d^\perp_1,d^\perp_2)=(2,5)$,
so $U_C=\{2,3,5\}$ and $V_C=\{1,4\}$.
Then $U_C\cap V_C=\emptyset$ and $U_C\cup V_C=\{1,2,3,4,5\}$,
as asserted by Theorem~\ref{thm:wei}.}
\end{example}

\bibliographystyle{tran}

\end{document}